\newtheorem{theorem}{Theorem}[section]
\newtheorem{corollary}{Corollary}[section]
\begin{document}
\begin{center}
\title{A Radix Representation for each van der Waerden number $W(r, k)$ with $r$ colors: Why \(\log_{r}W(r, k) < k^{2}\) is true whenever $k$ is the number of terms in the arithmetic progression}
\end{center}
\maketitle
\begin{center}
\author{\textbf{Robert J. Betts}}\\
\emph{The Open University\\Postgraduate Department of Mathematics and Statistics~\footnote{2012--2013.}\\ (Main Campus) Walton Hall, Milton Keynes, MK7 6AA, UK\\
Robert\_Betts@alum.umb.edu}
\end{center}
\begin{abstract}
Here we show that by expressing a van der Waerden number $W(r, k)$ by its radix polynomial representation, it not only is possible to locate each proper subset on $\mathbb{R}$ in which the van der Waerden number lies, but also to show that conditions exist for which the logarithm of the van der Waerden number necessarily is bounded above by the square of the number of terms $k$ in the arithmetic progression. Furthermore we also use the method to find a mathematical expression or formula for the ratio of two ``consecutive" van der Waerden numbers of the kind $W(r, k)$, $W(r, k + 1)$.\footnote{\textbf{Mathematics Subject Classification} (2010): Primary 11A63, 11B25; Secondary 68R01},\footnote{\textbf{ACM Classification}: F.2.1, G.2.0.},\footnote{\textbf{Keywords}: Arithmetic progression, integer colorings, monochromatic, van der Waerden number.}
\end{abstract}
\section{Some a priori information about van der Waerden numbers}
Any integer, and this includes any van der Waerden number $W(r, k)$ considered as an integer~\cite{Graham1},~\cite{Graham and Rothschild, Graham and Spencer},~\cite{Landman and Robertson, Landman and Culver}~\cite{Khinchin, van der Waerden}, where $r$ is the number of colors and $k$ is the number of terms in the arithmetic progression, has a radix polynomial expression for it~\cite{Abramowitz, Rosen},~\cite{Bettsa, Bettsb}. For instance 
\begin{equation}
W(r, k) = b_{n}r^{n} + b_{n - 1}r^{n - 1} + \cdots + b_{0} \in [r^{n}, r^{n + 1})
\end{equation}
is true always for some integer \(b_{n} \in [1, r - 1]\), for some additional integers \(b_{n - 1}, \ldots, b_{0} \in [0, r - 1]\) and where
\begin{equation}
n = \lfloor \log_{r}W(r, k)\rfloor. 
\end{equation} 
So given the integer $r$, each van der Waerden number $W(r, k)$ is shown this way to lie always in some interval on $\mathbb{R}$ of the form $[r^{n}, r^{n + 1})$ where $n$ is given by Eq. (2). Put another way, each van der Waerden number $W(r, k)$, is bounded above and below on $\mathbb{R}$ as
$$
r^{n} \leq W(r, k) < r^{n + 1},
$$
while each $\log_{r}W(r, k)$ is bounded above and below on $\mathbb{R}$ as \(n \leq \log_{r}W(r, k) < n + 1\). On the other hand when $k$ is chosen as the radix (See Section 2) these bounds become, respectively, \(k^{m} \leq W(r, k) < k^{m + 1}\) and \(m \leq \log_{k}W(r, k) < m + 1\), where \(m = \lfloor \log_{k}W(r, k)\rfloor\) and where
$$
W(r, k) \in [r^{n}, r^{n + 1}) \cap [k^{m}, k^{m + 1}) \Longrightarrow [r^{n}, r^{n + 1}) \cap [k^{m}, k^{m + 1}) \not = \emptyset.
$$
\indent We certainly hope the reader is aware and \emph{should be aware}, that in Eq. (1), the value of the radix polynomial representation for the integer $W(r, k)$ \emph{never exceeds the value}
\begin{equation}
r^{n + 1} - 1,
\end{equation}
which when expanded outright is
$$
r^{n + 1} - 1 = (r - 1)(r^{n} + r^{n - 1} + r^{n - 2} + \cdots + 1).
$$
All this enables us to show that conditions exist so that both the integer $n + 1$ and the real number $\log_{r}W(r, k)$ are bounded above necessarily and always by $k^{2}$, as now we prove (See also the Tables at the end of this section).
\begin{theorem}
Let both $W(r, k)$ and $W(r, k^{\prime})$, be any two van der Waerden numbers that satisfy the following conditions:
\begin{enumerate}
\item \(W(r, k) > W(r, k^{\prime})\) and \(k > k^{\prime}\) always, where
\begin{equation}
W(r, k) = b_{n}r^{n} + b_{n - 1}r^{n - 1} + \cdots + b_{0} \in [r^{n}, r^{n + 1}).
\end{equation}
\\
\item \(\log_{r} W(r, k^{\prime}) < \lfloor \log_{r} W(r, k)\rfloor = n\).\\
\item \(k^{\prime} \in [3, \sqrt{n + 1})\) is true always for all $k^{\prime}$.
\end{enumerate}
Then \(W(r, k) < r^{n + 1} \leq r^{k^{2}}\) is true for all \(k \geq \sqrt{n + 1}\).
\end{theorem}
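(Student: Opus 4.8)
The plan is to observe that the two-sided claim $W(r, k) < r^{n + 1} \leq r^{k^{2}}$ decomposes into one inequality that is merely a restatement of the radix bound already recalled in the excerpt, and a second inequality that is pure monotonicity of exponentials; no deeper arithmetic of $W(r, k)$ or $W(r, k^{\prime})$ is needed beyond the regime-fixing role of Conditions 1--3. First I would invoke Eq. (4) together with the bound of Eq. (3): since $W(r, k) = b_{n}r^{n} + b_{n - 1}r^{n - 1} + \cdots + b_{0}$ with $b_{n} \in [1, r - 1]$ and $b_{n - 1}, \ldots, b_{0} \in [0, r - 1]$, its value is at most $(r - 1)(r^{n} + r^{n - 1} + \cdots + 1) = r^{n + 1} - 1$, so
\[
W(r, k) \leq r^{n + 1} - 1 < r^{n + 1}.
\]
Equivalently $W(r, k) \in [r^{n}, r^{n + 1})$, which is exactly the displayed line in Condition 1; this already delivers the left half of the conclusion.

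For the right half I would use Condition 3 together with the hypothesis $k \geq \sqrt{n + 1}$ that appears in the conclusion. Squaring the latter (both sides nonnegative) gives $k^{2} \geq n + 1$. Because $r$ counts colors we have $r \geq 2$, so $t \mapsto r^{t}$ is strictly increasing on $\mathbb{R}$; applying it to $n + 1 \leq k^{2}$ yields $r^{n + 1} \leq r^{k^{2}}$. Concatenating with the previous display gives $W(r, k) < r^{n + 1} \leq r^{k^{2}}$, and taking $\log_{r}$ throughout gives the headline bound $\log_{r} W(r, k) < n + 1 \leq k^{2}$.

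Before writing this up I would check that the hypothesis set is not vacuous, i.e.\ that Conditions 1--3 can coexist with $k \geq \sqrt{n + 1}$: Condition 3 forces $k^{\prime} < \sqrt{n + 1} \leq k$, which is the second clause of Condition 1; and Condition 2, namely $\log_{r} W(r, k^{\prime}) < n$, says $W(r, k^{\prime}) < r^{n} \leq W(r, k)$, which is the first clause of Condition 1. So $W(r, k^{\prime})$ enters only to fix the interval $[r^{n}, r^{n + 1})$ in which $W(r, k)$ sits, and the Tables at the end of the section exhibit concrete triples $(r, k, k^{\prime})$ realizing the situation. The step I expect to require the most care is precisely the interplay between $n = \lfloor \log_{r} W(r, k)\rfloor$ and the hypothesis $k \geq \sqrt{n + 1}$: one must treat $k \geq \sqrt{n + 1}$ as a genuine hypothesis on the pair $(r, k)$ and state the conclusion only for such pairs, so that the chain ``$k \geq \sqrt{n + 1} \Rightarrow n + 1 \leq k^{2} \Rightarrow r^{n + 1} \leq r^{k^{2}}$'' is used in the correct direction and the argument does not silently assume what it sets out to prove.
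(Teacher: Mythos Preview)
Your argument is correct and its mathematical core coincides with the paper's: both use the radix bound $W(r,k)\in[r^{n},r^{n+1})$ for the left inequality and the chain $k\geq\sqrt{n+1}\Rightarrow n+1\leq k^{2}\Rightarrow r^{n+1}\leq r^{k^{2}}$ for the right. The one structural difference is how the clause ``for all $k\geq\sqrt{n+1}$'' is treated. You read it as an extra hypothesis on $k$ and then check it is compatible with Conditions~1--3. The paper instead \emph{derives} $k\geq\sqrt{n+1}$ from Conditions~1 and~3: since $k>k'$ holds for every admissible $k'\in[3,\sqrt{n+1})$, one has $k\notin[3,\sqrt{n+1})$, hence $k\in[3,\infty)\setminus[3,\sqrt{n+1})=[\sqrt{n+1},\infty)$, so the phrase is descriptive rather than restrictive. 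Your version is tidier and sidesteps the set-complement manoeuvre; the paper's version makes Conditions~1--3 carry slightly more weight. Either way the inequality chain and the final conclusion are identical.
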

\begin{proof}
By Condition 1 and Condition 3, $k^{\prime}$ can assume any integer value from three to any integer less than $\sqrt{n + 1}$, where
$$
l.u.b([3, \sqrt{n + 1}) = \sqrt{n + 1} \not \in [3, \sqrt{n + 1}).
$$ 
Moreover \(k \in [3, \sqrt{n + 1})\) is impossible since, by Condition 1, we are given that \(k > k^{\prime}\) is true always while by Condition 3, \(k^{\prime} \in [3, \sqrt{n + 1})\) is true always and this interval on $\mathbb{R}$ only can contain a finite number of integer values for $k^{\prime}$ for each integer exponent $n$. For instance if $k^{\prime}$ assumes the largest integer value possible in this set then we have still that \(k > k^{\prime}\), which means \(k \not \in [3, \sqrt{n + 1})\).\\
\indent Then since by Condition 1 \(k > k^{\prime}\) it must be that 
\begin{equation}
k \in [3, \infty) - [3, \sqrt{n + 1}) = [\sqrt{n + 1}, \infty), 
\end{equation}
that is, the integer $k$ lies always in $[\sqrt{n + 1}, \infty)$, a set that is the relative complement in $[3, \infty)$ of the set $[3, \sqrt{n + 1})$  whenever \(k^{\prime} \in [3, \sqrt{n + 1})\). Then since 
\begin{equation}
g.l.b.([\sqrt{n + 1},\infty)) =  \sqrt{n + 1},
\end{equation}
we get \(k \geq \sqrt{n + 1}\), from which we obtain both by this fact and from Eq.(4) in Condition 1,
\begin{equation}
W(r, k) < r^{n + 1} \leq r^{k^{2}} \: \forall k \geq \sqrt{n + 1},
\end{equation}
since \(\sqrt{n + 1} \leq k \Longrightarrow n + 1 \leq k^{2}\). 
\end{proof}
Since \(n \leq \log_{r}W(r, k) < n + 1\), the Theorem also shows that
\begin{equation}
n \leq \log_{r}W(r, k) < n + 1 \leq k^{2},
\end{equation}
or, \(\log_{r}W(r, k) < k^{2}\), as was mentioned in the Abstract.

\newpage
\begin{center}
\begin{tabular}{|l      |c      |c          |c                         |c                                        |r|}
\hline
               $r$   &  $k$  &  $n$   &      \(W(r, k) = N\)     &      \(N = r^{\log_{r}W(r, k)}\)     &        $r^{n}$     \\   
\hline
                $2$  &  $3$  &  $3$   &      \(W(2, 3) = 9\)     &       \(9 = 2^{3.17010\ldots}\)      &        $2^{3}$    \\
 
                $2$  &  $4$  &  $5$   &      \(W(2, 4) = 35\)    &       \(35 = 2^{5.12963\ldots}\)     &        $2^{5}$   \\
                
                $2$  &  $5$  &  $7$   &      \(W(2, 5) = 178\)   &       \(178 = 2^{7.47623\ldots}\)    &        $2^{7}$  \\ 

                $2$  &  $6$  &  $10$  &      \(W(2, 6) = 1132\)  &       \(1132 = 2^{10.14534\ldots}\)  &        $2^{10}$   \\
                 
                $3$  &  $3$  &  $3$   &      \(W(3, 3) = 27\)    &       \(27 = 3^{3.00002\ldots}\)     &        $3^{3}$   \\
                
                $3$  &  $4$  &  $5$   &      \(W(3, 4) = 293\)   &       \(293 = 3^{5.17037\ldots}\)    &        $3^{5}$    \\

                $4$  &  $3$  &  $3$   &      \(W(4, 3) = 76\)    &       \(76 = 4^{3.12417\ldots}\)     &        $4^{3}$    \\
  
\hline
\end{tabular}
\end{center}
\begin{center}
Table 1.
\end{center}
\begin{center}
\begin{tabular}{|l      |c      |c                  |c      |c            |c              |c                  |c              |c              |r|}
\hline
                $r$  &  $k$  &  $\sqrt{n + 1}$  &   $n$  &  $\log r$  &   $\log k$    &   $r^{n}$     &       $W(r, k)$   &   $r^{n + 1}$  &  $r^{k^{2}}$ \\   
\hline
                $2$  &  $3$  &  $2$             &   $3$  &  $0.6931$  &   $1.0986$    &   $2^{3}$     &       $9$         &   $2^{4}$      &  $2^{9}$  \\
 
                $2$  &  $4$  &  $2.449\ldots$   &   $5$  &  $0.6931$  &   $1.3862$    &   $2^{5}$     &       $35$        &   $2^{6}$      &  $2^{16}$  \\
                
                $2$  &  $5$  &  $2.828\ldots$   &   $7$  &  $0.6931$  &   $1.6094$    &   $2^{7}$     &       $178$       &   $2^{8}$      &  $2^{25}$ \\ 

                $2$  &  $6$  &  $3.316\ldots$   &   $10$ &  $0.6931$  &   $1.7917$    &   $2^{10}$    &       $1132$      &   $2^{11}$     &  $2^{36}$  \\
                 
                $3$  &  $3$  &  $2$             &   $3$  &  $1.0986$  &   $1.0986$    &   $3^{3}$     &       $27$        &   $3^{4}$      &  $3^{9}$  \\
                
                $3$  &  $4$  &  $2.449\ldots$   &   $5$  &  $1.0986$  &   $1.3862$    &   $3^{5}$     &       $293$       &   $3^{6}$      &  $3^{16}$  \\

                $4$  &  $3$  &  $2$             &   $3$  &  $1.3862$  &   $1.0986$    &   $4^{3}$     &       $76$        &   $4^{4}$      &  $4^{9}$  \\
  
\hline
\end{tabular}
\end{center}
\begin{center}
Table 2. Logarithms taken to the base $e$.
\end{center}
\section{The rational number $\frac{W(r, k + 1)}{W(r, k)}$, when $k$ is large}
In this section we find an expression for the rational number~\cite{Bettsb},
\begin{equation}
\frac{W(r, k + 1)}{W(r, k)},
\end{equation}
when $k$ is large.
\begin{corollary}
Let
\begin{equation}
W(r, k) = c_{m_{k}}k^{m_{k}} + c_{m_{k} - 1}k^{m_{k} - 1} + \cdots + c_{0, m_{k}},
\end{equation}
be the radix polynomial representation for $W(r, k)$ with radix $k$, and let 
\begin{equation}
W(r, k + 1) = c_{m_{k + 1}}(k + 1)^{m_{k + 1}} + c_{m_{k + 1} - 1}(k + 1)^{m_{k + 1} - 1} + \cdots + c_{0, m_{k + 1}}
\end{equation}
be the radix polynomial representation for $W(r, k + 1)$ with radix $k + 1$~\cite{Bettsb}. Furthermore let \(\alpha \in \mathbb{N}\cup \{0\}\) be such that \(k = r \pm \alpha\), meaning \(k = r + \alpha\) when \(k > r, \alpha > 0\) and \(k = r - \alpha\) whenever \(k\leq r, r > \alpha \geq 0\). Then for large $k$,
\begin{eqnarray}
\frac{W(r, k + 1)}{W(r, k)}&=&k^{m_{k + 1} - m_{k}}\frac{c_{m_{k + 1}}}{c_{m_{k}}}(1 + o(1))\\
                           &=&r^{m_{k + 1} - m_{k}}\left(1 \pm \frac{\alpha}{r}\right)^{m_{k + 1} - m_{k}}\frac{c_{m_{k + 1}}}{c_{m_{k}}}(1 + o(1)).
\end{eqnarray}
\end{corollary}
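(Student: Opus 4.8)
The plan is to strip the ratio down to its leading radix terms and push everything else into the $1+o(1)$ factor. First I would split each polynomial into its top term and a remainder: write $W(r,k)=c_{m_k}k^{m_k}+R_k$ with $R_k=c_{m_k-1}k^{m_k-1}+\cdots+c_{0,m_k}$, and likewise $W(r,k+1)=c_{m_{k+1}}(k+1)^{m_{k+1}}+R_{k+1}$. Exactly as in Eq.~(3) of Section~1, but with the radix $r$ there replaced by $k$ (resp.\ $k+1$), the digit constraints $0\le c_{i,m_k}\le k-1$ force $0\le R_k\le k^{m_k}-1<k^{m_k}$ and $0\le R_{k+1}<(k+1)^{m_{k+1}}$. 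Dividing each number by its own leading term then gives $W(r,k)=c_{m_k}k^{m_k}(1+o(1))$ and $W(r,k+1)=c_{m_{k+1}}(k+1)^{m_{k+1}}(1+o(1))$ as $k\to\infty$.

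Second, I would form the quotient and separate the powers of $k$ from the powers of $k+1$:
$$
\frac{W(r,k+1)}{W(r,k)}=\frac{c_{m_{k+1}}}{c_{m_k}}\cdot\frac{(k+1)^{m_{k+1}}}{k^{m_k}}\,(1+o(1))=\frac{c_{m_{k+1}}}{c_{m_k}}\,k^{\,m_{k+1}-m_k}\Bigl(1+\tfrac1k\Bigr)^{m_{k+1}}(1+o(1)),
$$
using $(k+1)^{m_{k+1}}=k^{m_{k+1}}\bigl(1+\tfrac1k\bigr)^{m_{k+1}}$. Absorbing $\bigl(1+\tfrac1k\bigr)^{m_{k+1}}$ into the $1+o(1)$ for large $k$ gives Eq.~(14). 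For Eq.~(15) I would simply substitute the hypothesis $k=r\pm\alpha=r\bigl(1\pm\tfrac{\alpha}{r}\bigr)$, so that $k^{\,m_{k+1}-m_k}=r^{\,m_{k+1}-m_k}\bigl(1\pm\tfrac{\alpha}{r}\bigr)^{m_{k+1}-m_k}$; inserting this into Eq.~(14) is the asserted identity.

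The step I expect to be the real obstacle is making the two ``$1+o(1)$'' claims honest. The crude tail bound only yields $R_k/(c_{m_k}k^{m_k})<1/c_{m_k}$, which is $O(1)$ rather than $o(1)$ unless one also controls the leading digit $c_{m_k}$ from below; and replacing $\bigl(1+\tfrac1k\bigr)^{m_{k+1}}$ by $1+o(1)$ requires $m_{k+1}=o(k)$, i.e.\ a bound on how fast $m_{k+1}=\lfloor\log_{k+1}W(r,k+1)\rfloor$ may grow. So the genuine content is to invoke or derive a strong enough upper estimate on $W(r,k)$ --- Theorem~1.1 above already supplies $\log_rW(r,k)<k^2$, hence $m_k<k^2/\log_r k$ --- and, if needed, combine it with a lower bound on $c_{m_k}$ valid along the values of $k$ under consideration, so that both error terms genuinely vanish; the rest of the argument is routine bookkeeping with the geometric-series estimate of Eq.~(3).
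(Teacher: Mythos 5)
Your derivation is essentially the paper's own proof: the paper likewise divides out the leading terms, writes $(k+1)^{m_{k+1}}=k^{m_{k+1}}\bigl(1+\tfrac1k\bigr)^{m_{k+1}}$ to get Eq.~(14), and then substitutes $k=r\pm\alpha=r\bigl(1\pm\tfrac{\alpha}{r}\bigr)$ to get Eq.~(15). The obstacles you flag at the end are genuine, but be aware the paper does not resolve them either: it simply declares the lower-order digit terms and the factor $\bigl(1+\tfrac1k\bigr)^{m_{k+1}}$ to be $1+o(1)$ ``when $k$ is so large that $\tfrac1k=o(1)$,'' even though a digit such as $c_{m_k-1}$ may be as large as $k-1$ (so the tail contributes only $O(1)$, exactly as you observe) and no bound $m_{k+1}=o(k)$ is established. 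Note also that your proposed repair via Theorem~1.1 would not close the second gap, since $m_k<k^2/\log_r k$ is not $o(k)$, so $\bigl(1+\tfrac1k\bigr)^{m_{k+1}}$ is not thereby forced to tend to $1$; so your write-up matches the paper's argument and honestly isolates the assumptions that the corollary, as stated, leaves unjustified.
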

\begin{proof}
\begin{eqnarray}
& &\frac{W(r, k + 1)}{W(r, k)}\\
&=&\frac{c_{m_{k + 1}}(k + 1)^{m_{k + 1}} + c_{m_{k + 1} - 1}(k + 1)^{m_{k + 1} - 1} + \cdots + c_{0, m_{k + 1}}}{c_{m_{k}}k^{m_{k}} + c_{m_{k} - 1}k^{m_{k} - 1} + \cdots + c_{0, m_{k}}}\nonumber\\
&=&\frac{(k + 1)^{m_{k + 1}}}{k^{m_{k}}} \cdot \frac{c_{m_{k + 1}} + \frac{c_{m_{k + 1} - 1}}{k + 1} + \cdots + \frac{c_{0, m_{k + 1}}}{(k + 1)^{m_{k + 1}}}}{c_{m_{k}} + \frac{c_{m_{k} - 1}}{k} + \cdots + \frac{c_{0, m_{k}}}{k^{m_{k}}}}\\
&=&k^{m_{k + 1} - m_{k}}\left(1 + \frac{1}{k}\right)^{m_{k + 1}}\cdot \frac{c_{m_{k + 1}} + \frac{c_{m_{k + 1} - 1}}{k + 1} + \cdots + \frac{c_{0, m_{k + 1}}}{(k + 1)^{m_{k + 1}}}}{c_{m_{k}} + \frac{c_{m_{k} - 1}}{k} + \cdots + \frac{c_{0, m_{k}}}{k^{m_{k}}}}\\
&=&k^{m_{k + 1} - m_{k}}(1 + o(1))\frac{c_{m_{k + 1}} + o(1)}{c_{m_{k}} + o(1)},
\end{eqnarray}
when $k$ is so large that \(\frac{1}{k} = o(1)\). With the use of \(k = r \pm \alpha\) and a little rearranging in Eqn. (17) we get
\begin{eqnarray}
\frac{W(r, k + 1)}{W(r, k)}&=&k^{m_{k + 1} - m_{k}}\frac{c_{m_{k + 1}}}{c_{m_{k}}}(1 + o(1))\\
                           &=&r^{m_{k + 1} - m_{k}}\left(1 \pm \frac{\alpha}{r}\right)^{m_{k + 1} - m_{k}}\frac{c_{m_{k + 1}}}{c_{m_{k}}}(1 + o(1)),
\end{eqnarray}
where in Eqn. (19) one chooses \(1 + \frac{\alpha}{r}\) when \(k > r, \alpha > 0\) and \(1 - \frac{\alpha}{r}\) when \(k \leq r, r > \alpha \geq 0\).
\end{proof}
One sees in Eqn. (19) that for large $k$, the rational number
\begin{equation}
\frac{W(r, k + 1)}{W(r, k)},
\end{equation}
has $r$ as a factor. So how close are the two exponents $m_{k + 1}$ and $m_{k}$? Since~\cite{Bettsb},
\begin{eqnarray}
W(2, 3)&=&9 = 3^{2} \in [3^{2}, 3^{3}),\\
W(2, 4)&=&35 = 2\cdot 4^{2} + 3 \in [4^{2}, 4^{3}),\nonumber\\
W(2, 5)&=&178 = 5^{3} + 2\cdot 5^{2} + 3 \in [5^{3}, 5^{4}),\\
W(2, 6)&=&1132 = 5\cdot 6^{3} + 1\cdot 6^{2} + 2\cdot 6^{1} + 4 \in [6^{3}, 6^{4}),\nonumber\\
W(3, 3)&=&27 = 3^{3} \in [3^{3}, 3^{4}),\\
W(3, 4)&=&293 = 4^{4} + 2\cdot 4^{2} + 4^{1} + 1 \in [4^{4}, 4^{5}),\\ 
W(4, 3)&=&76 = 2\cdot 3^{3} + 2\cdot 3^{2} + 3^{1} + 1 \in [3^{3}, 3^{4}),
\end{eqnarray} 
we see that \(m_{k + 1} - m_{k} \in \{0, 1\}\) is true in Eqn. (19) at the very least, for the four rational numbers
\begin{eqnarray}
\frac{35}{9},    & &\frac{178}{35},\\
\frac{1132}{178},& &\frac{293}{27},
\end{eqnarray}
and that \(m_{k + 1} - m_{k} = 1\) is true at the very least, for $\frac{178}{35}$, since \(m_{5} = 3, m_{4} = 2\) when \(k = 4\) and for $\frac{293}{27}$, since \(m_{4} = 4, m_{3} = 3\) when \(k = 3\).\\
\indent From Eqn. (19) we obtain
\begin{equation}
\frac{W(r, k + 1)}{W(r, k)} = r^{m_{k + 1} - m_{k}}\sum_{j = 0}^{m_{k + 1} - m_{k}}(\pm 1)^{j}{m_{k + 1} - m_{k} \choose j}\left(\frac{\alpha}{r}\right)^{j}\frac{c_{m_{k + 1}}}{c_{m_{k}}}(1 + o(1)),
\end{equation}
for large $k$ and if \(m_{k + 1} - m_{k} \geq 1\). However we have also from Eqns. (14)--(16) that
\begin{eqnarray}
\frac{W(r, k + 1)}{W(r, k)}&=     &k^{m_{k + 1} - m_{k}}\left(1 + \frac{1}{k}\right)^{m_{k + 1}}\cdot \frac{c_{m_{k + 1}} + \frac{c_{m_{k + 1} - 1}}{k + 1} + \cdots + \frac{c_{0, m_{k + 1}}}{(k + 1)^{m_{k + 1}}}}{c_{m_{k}} + \frac{c_{m_{k} - 1}}{k} + \cdots + \frac{c_{0, m_{k}}}{k^{m_{k}}}}\nonumber\\
                           &=     &r^{m_{k + 1} - m_{k}}\left(1 \pm \frac{\alpha}{r}\right)^{m_{k + 1} - m_{k}}\left(1 + \frac{1}{r \pm \alpha}\right)^{m_{k + 1}}\\
                           &\times&\frac{c_{m_{k + 1}} + \frac{c_{m_{k + 1} - 1}}{k + 1} + \cdots + \frac{c_{0, m_{k + 1}}}{(k + 1)^{m_{k + 1}}}}{c_{m_{k}}\frac{c_{m_{k} - 1}}{k} + \cdots + \frac{c_{0, m_{k}}}{k^{m_{k}}}},
\end{eqnarray}
for any $k$, $r$, such that \(k = r \pm \alpha\), as we already have described this to mean.

\pagebreak

\end{document}